\def\sq{\mathbin{{\strut\rule{1.25ex}{1.25ex}}}}
\newcommand{\beq}{\begin{equation}}
\newcommand{\eeq}{\end{equation}}
\def\mathcolor#1#{\@mathcolor{#1}}
\def\@mathcolor#1#2#3{%
  \protect\leavevmode
  \begingroup
    \color#1{#2}#3%
  \endgroup
}
\newcounter{algorithmctr}[section]
\renewcommand{\thealgorithmctr}{\thesection.\arabic{algorithmctr}}
{\refstepcounter{algorithmctr}\begin{list}{}{%
\setlength{\rightmargin}{0\linewidth}%
\setlength{\leftmargin}{.05\linewidth}
\setlength{\itemsep}{1pt}
\setlength{\parskip}{0pt}
\setlength{\parsep}{0pt}}%
\rmfamily\small
\item[]{\setlength{\parskip}{0ex}\hrulefill\par%
\nopagebreak{\bfseries\textsf{Algorithm \thealgorithmctr~}}}}%
{{\setlength{\parskip}{-1ex}\nopagebreak\par\hrulefill} \end{list}}
\newtheoremstyle{boldStyle}
  {\topsep}
  {\topsep}
  {\itshape}
  {0pt}
  {\bfseries}
  {.}
  { }
  {\thmname{#1}\thmnumber{ #2}\thmnote{ (#3)}}
\newtheoremstyle{italicStyle}
  {\topsep}
  {\topsep}
  {}
  {0pt}
  {\bfseries}
  {.}
  { }
  {\thmname{#1}\thmnumber{ #2}\thmnote{ (#3)}}
\theoremstyle{boldStyle}
\newtheorem{theorem}{Theorem}
\newtheorem{corollary}{Corollary}
\theoremstyle{italicStyle}
\newtheorem{assumption}{Assumption}
\newtheorem{remark}{Remark}
\renewenvironment{proof}{{\textbf{Proof:}}}{\hfill$\sq$}
\newcommand{\fixed@sra}{$\vrule height 2\fontdimen22\textfont2 width 0pt\shortrightarrow$}
\newcommand{\shortarrow}[1]{%
  \mathrel{\text{\rotatebox[origin=c]{\numexpr#1*45}{\fixed@sra}}}
}
\newcommand\fs@betterruled{%
  \def\@fs@cfont{\bfseries}\let\@fs@capt\floatc@ruled
  \def\@fs@pre{\vspace*{5pt}\hrule height.8pt depth0pt \kern2pt}%
  \def\@fs@post{\kern2pt\hrule\relax}%
  \def\@fs@mid{\kern2pt\hrule\kern2pt}%
  \let\@fs@iftopcapt\iftrue}
\title{\LARGE \bf 
Iterative Model Predictive Control for Piecewise Systems}
\author{ Ugo Rosolia and Aaron D. Ames  
\thanks{Ugo Rosolia and Aaron D. Ames are with the AMBER lab at Caltech, Pasadena, USA. E-mails: {\tt\scriptsize{\{urosolia, ames\}@caltech.edu}}. The authors would like to acknowledge the support of the NSF award \#1932091
. }
}%
\begin{document}

\maketitle
\thispagestyle{empty}
\pagestyle{empty}
\begin{abstract}
In this paper, we present an iterative Model Predictive Control (MPC) design for piecewise nonlinear systems. We consider finite time control tasks where the goal of the controller is to steer the system from a starting configuration to a goal state while minimizing a cost function. First, we present an algorithm that leverages a feasible trajectory that completes the task to construct a control policy which guarantees that state and input constraints are recursively satisfied and that the closed-loop system reaches the goal state in finite time. Utilizing this construction, we present a policy iteration scheme that iteratively generates safe trajectories which have non-decreasing performance. Finally, we test the proposed strategy on a discretized Spring Loaded Inverted Pendulum (SLIP) model with massless legs. We show that our methodology is robust to changes in initial conditions and disturbances acting on the system. Furthermore, we demonstrate the effectiveness of our policy iteration algorithm in a minimum time control task. 
\end{abstract}

\section{Introduction}

Robots performing complex tasks can be described as hybrid systems, which are characterized by continuous dynamics and discrete events. Therefore, controllers designed for such systems can take control actions based on continuous and discrete decision variables. Yet the presence of discrete variables make planning and control problems challenging, as it is required to reason about all possible combinations of discrete events. This challenge can be mitigated by designing hierarchical strategies, where a high-level planner computes the discrete variables and a low-level controller optimizes the system trajectory described by continuous variables~\cite{jenelten2020perceptive, villarreal2020mpc, grandia2020multi, grandia2020nonlinear}. 

A popular methodology to synthesize policies, which can jointly plan over discrete and continuous states is Model Predictive Control (MPC)~\cite{bemporad1999control, bemporad2000piecewise, oberdieck2015explicit, borrelli2017predictive}. MPC is a control strategy which systematically uses forecast to compute control actions. At each time step, an MPC plans a trajectory over a short time window, then the first control action is applied to the system and the process is repeated at the next time step based on new measurements. When the system dynamics are hybrid, the MPC planning problem is a Mixed Integer Program (MIP) that is hard to solve online with limited computational resources. For this reason, significant work has focused on explicit MPC strategies where the solution to the MIP is solved offline as a parametric optimization problem~\cite{dua2002multiparametric, oberdieck2015explicit, borrelli2017predictive}. For hybrid systems described by piecewise affine dynamics the parametric optimization problem can be solved exactly. Once the solution is computed offline, the MPC policy is given by a look-up table of feedback gains that can be efficiently implemented online in real-time~\cite{nascu2017explicit, darivianakis2014hybrid}. 
However, computing the explicit solution to hybrid MPC problems is computationally demanding.

Another strategy to speed-up the computation of the MPC policy is to leverage warm-starting strategies, where the optimization algorithm is initialized using a candidate solution. 
Several strategies have been proposed for warm-starting hybrid MPC problems~\cite{marcucci2017approximate, marcucci2020warm, frick2015embedded, hespanhol2019structure}. These approaches leverage the  trajectory computed at the previous time step to warm-start both the continuous and discrete variables. As the complexity of solving MIPs is given by the computation of the optimal integer variables, recent works have investigated the possibility of leveraging learning algorithms to predict the set of active discrete variables used to warm-start the MPC~\cite{zhu2019fast, bertsimas2020voice, agrawal2020learning}. 

In this work, we focus on control tasks where the goal is to steer the system from a starting configuration to a goal state in finite time, while satisfying state and input constraints. We assume that a feasible trajectory that is able to perform the task is available. Then, we synthesize a control policy, which plans the system trajectory over a finite horizon that is shorter than the control task duration and may cause the controller to take unsafe shortsighted control actions. Thus, building upon~\cite{rosolia2017learning, rosolia2020minimum}, we present a methodology to construct the MPC terminal components in order to guarantee satisfaction of the safety constraints and convergence in finite time of the closed-loop system to the goal set.

Compared to previous works~\cite{rosolia2017learning, rosolia2020minimum}, we show how to handle piecewise systems by warm-starting the integer variables and we present a shrinking horizon strategy tailored to finite time control tasks. We present an algorithm which solves at most $M$ smooth optimization problems and is guaranteed to find a feasible solution to the original MIP planning problem. Our approach is based on a sub-optimal trajectory that can complete the task and affects the closed-loop performance of the controller. Therefore, we present a policy iteration algorithm, where simulations are used to iteratively update the controller. We prove that our algorithm returns safe policies that have non-decreasing performance. Finally, we demonstrate the effectiveness of our approach on the discretized Spring Loaded Inverted Pendulum (SLIP)~\cite{shahbazi2016unified}. 


\section{Problem Formulation}\label{sec:problemFormulation}
In this section, we describe the system model and the control synthesis objectives. We consider discrete time piecewise nonlinear systems defined over $R$ disjoint regions  $\mathcal{D}_i \subseteq \mathbb{R}^n$ for $i \in \{1,\ldots, R\}$:
\begin{equation}\label{eq:sys}
    x_{t+1} = f_i(x_t, u_t), \quad \mbox{if } x_t \in \mathcal{D}_i,
\end{equation}
where the state $x_t \in \mathbb{R}^n$ and the input $u_t \in \mathbb{R}^d$. In the above equation $f_i : \mathbb{R}^n \times \mathbb{R}^d \rightarrow \mathbb{R}^n$ represents the system dynamics, which describe the evolution of the discrete time system when the state $x_t$ belongs to the region $\mathcal{D}_i \subseteq \mathbb{R}^n$. 
Furthermore, the system is subject to the following state and input constraints: 
\begin{equation}\label{eq:cnstr}
    u_t \in \mathcal{U} \subseteq \mathbb{R}^d \text{ and } x_t \in \mathcal{X} \subseteq \mathbb{R}^n,~\forall t \in \{0, \ldots, T-1\},
\end{equation}
where $T \in \{0,1,\ldots\}$ is the duration of the control task. Notice that several robotic systems can be described by piecewise constrained nonlinear models defined over disjoint regions, 
such as the SLIP model presented in Section~\ref{sec:SLIP}. 

\textbf{Objective: } Our goal is to design a control policy $\pi:\mathbb{R}^n \rightarrow \mathbb{R}^d$ which maps states to actions, i.e.,
\begin{equation}\label{eq:policy}
    u_t = \pi(x_t).
\end{equation}
The above control policy should guarantee that the state and input constraints from~\eqref{eq:cnstr} are satisfied and that the closed-loop system~\eqref{eq:sys} and \eqref{eq:policy} converges in finite time to a goal set $\mathcal{G} \subset \mathcal{X}$. More formally, the control policy~\eqref{eq:policy} should guarantee that for an initial condition $x_I$ in a neighborhood of a starting state $x_S \in \mathcal{X}$, the trajectory of the closed-loop system~\eqref{eq:sys} and \eqref{eq:policy} is a feasible solution to the following \textit{Finite Time Optimal Control Problem (FTOCP)}:
\begin{equation}\label{eq:ftocp}
    \begin{aligned}
        \min_{\substack{u_0, \ldots, u_{T-1}\\ i_0, \ldots, i_{T-1} } } \quad & \sum_{t=0}^{T-1} l(x_t, u_t) \\
        \text{s.t.} ~\quad \quad & x_{t+1} = f_{i_t}(x_t, u_t),\\
        & x_t \in \mathcal{D}_{i_t} \cap \mathcal{X} ,~u_t \in \mathcal{U},~i_t \in \{1,\ldots,R\},\\
        & x_0 = x_I, x_T \in \mathcal{G},\\
        & \forall t\in\{0,\ldots,T-1\},
    \end{aligned}
\end{equation}
where the stage cost $l:\mathbb{R}^n \times \mathbb{R}^d \rightarrow \mathbb{R}$. 
Notice that in the above problem the system dynamics are a function of the integer variables $i_t \in \{1, \ldots, R\}$. Therefore, for a feasible set of continuous inputs $[u_0, \ldots, u_{T-1}]$ and integer variables $[i_0, \ldots, i_{T-1}]$, we have that the resulting vector of states $[x_0, \ldots, x_{T}]$ must satisfy $x_t \in \mathcal{D}_{i_t}\cap\mathcal{X},~\forall t \in \{0,\ldots,T-1\}$. Throughout the paper we make the following assumptions.



\begin{assumption}\label{ass:firstFeas}
We are given the state-input trajectories which are feasible for the FTOCP~\eqref{eq:ftocp} with $x_I = x_S$: 
\begin{equation*}\label{eq:firstFeas}
    {\mathbf{x}}^0 = [x_0^0, \ldots, x_{T}^0] \text{ and } {\mathbf{u}}^0 = [u_0^0, \ldots, u_{T-1}^0],
\end{equation*}
where for all $t \in \{0,\ldots,T-1\}$ the state $x_t^0 \in \mathcal{X}$, the input $u_t^0 \in \mathcal{U}$ and $x_{T}^0\in\mathcal{G}$. 
\end{assumption}

\begin{assumption}\label{ass:invariance}
For any $x\in\mathcal{G}$, the stage cost $l(x,u)=0$, for all $u \in \mathcal{U}$. Moreover, the set $\mathcal{G}$ is control invariant, i.e, for all $x \in \mathcal{G}$, there exists a control $u \in \mathcal{U}$ and index $i \in \{1,\ldots,R\}$ such that $f_i(x,u) \in \mathcal{G}$ and $ x\in\mathcal{D}_i$.
\end{assumption}

\begin{remark}
The proposed methodology requires only feasibility of a trajectory ${\mathbf{x}}^0$. However, the optimality of the trajectory ${\mathbf{x}}^0$ affects the performance of the proposed control synthesis strategy. For this reason, in Section~\ref{sec:iterativeDesign} we present a policy iteration scheme that may be used to iteratively improve
 the closed-loop performance of the policy.

\end{remark}

\section{Control Design}\label{sec:controller}
In this section, we first introduce an FTOCP which can be recast as a non-linear program (NLP). Afterwards, we present the proposed strategy which leverages this NLP and the feasible trajectory from Assumption~\ref{ass:firstFeas}. Finally, we present a policy iteration scheme which can be used to improve the performance of the control policy.

\subsection{Model Predictive Control}
The FTOCP~\eqref{eq:ftocp} is challenging to solve as at each time $t$ the system dynamics change as a function of the state $x_t$, i.e., $x_{t+1} = f_i(x_t, u_t)$ if $x_t \in \mathcal{D}_i$. 
However, the computational complexity may be reduced when searching for a feasible sub-optimal solution. 
In particular, a feasible solution to the FTOCP~\eqref{eq:ftocp} may be computed fixing a priori a sequence of regions $\{\mathcal{D}_{i_0},\ldots,\mathcal{D}_{i_{T-1}}\}$, where the system should be constrained at each time $t$. 
Clearly, a trajectory which steers the system from the starting state $x_I$ to the goal set $\mathcal{G}$, while visiting the sequence of regions $\{\mathcal{D}_{i_0},\ldots,\mathcal{D}_{i_{T-1}}\}$ is a feasible trajectory for the original FTOCP~\eqref{eq:ftocp}.

In order to reduce the computational complexity, we introduce an FTOCP defined over a horizon $N$ shorter than $T$ and for a set of indices $\mathcal{I}_t=\{i_t, \ldots, i_{t+N-1}\}$ associated with a sequence of $N$ regions $\{\mathcal{D}_{i_t}, \ldots, \mathcal{D}_{i_{t+N-1}}\}$.
In particular, given a set of indices $\mathcal{I}_t$, the terminal state $x_F$ and an associated terminal cost $q_F\in\mathbb{R}$ we define the FTOCP:
\begin{equation}\label{eq:mpc}
    \begin{aligned}
        J(x_t, x_F, q_F, \mathcal{I}_t, N) = \min_{\mathbf{u}_t} \quad &\sum_{k=t}^{t+N-1} l(x_{k|t}, u_{k|t}) +  q_F \\
        \text{s.t.}  ~~~ & x_{k+1|t} = f_{i_k}(x_{k|t}, u_{k|t}),\\
        & x_{k|t} \in \mathcal{D}_{i_k} \cap \mathcal{X},~u_{k|t} \in \mathcal{U}, \\
        & x_{t|t} = x_t, x_{t+N|t} = x_F,\\
        &\forall k \in \{t, \ldots, t+N-1\},
    \end{aligned}
\end{equation}
where $\mathbf{u}_t=[u_{t|t}, \ldots, u_{t+N-1|t}]$
The optimal state-input sequences to the above FTOCP
\begin{equation}\label{eq:opt}
    [x_{t|t}^*, \ldots, x_{t+N|t}^*] \text { and }[u_{t|t}^*, \ldots, u_{t+N-1|t}^*],
\end{equation}
steer the system from the starting state $x_t$ to the terminal state $x_F$ while satisfying state and input constraints. 

In the FTOCP~\eqref{eq:mpc}, at each predicted time $k$ the system state $x_{k|t} \in \mathcal{D}_{i_k} \cap \mathcal{X}$, and therefore problem~\eqref{eq:mpc} can be recast as an NLP, which is easier to solve than problem~\eqref{eq:ftocp} where the optimization is carried out over continuous and integer variables. 
Next, we present the proposed algorithm which chooses the set $\mathcal{I}_t=\{i_t, \ldots, i_{t+N-1}\}$ associated with the sequence of regions $\{\mathcal{D}_{i_t}, \ldots, \mathcal{D}_{i_{t+N-1}}\}$,
the terminal state $x_F$, and terminal cost $q_F$ that are used in the FTOCP~\eqref{eq:mpc}.

\begin{algorithm}[t!]
  \caption{Control Policy $\pi$}\label{algo:policy}
  \begin{algorithmic}[1]
      \State \textbf{Init Parameters:} ${\mathbf{q}}^0$, ${\mathbf{x}}^0$, ${\mathbf{i}}^0$, $M$, $k_0 = N$, $N_0=N$, $T$
      \State \textbf{Input:} $x_t$
    \State $i_t = \texttt{getRegion}(x_t)$
      \For{$m = [0, \ldots, M-1]$} \Comment{Solve $M$ FTOCPs}
        \State set $t_F = \min(k_t + m, T)$
        \State set $x_F = x^0_{t_F}$ 
        \State set $q_F = q^0_{t_F}$
        \State set $\mathcal{I}_t = \{i_t,  i^0_{t_F-N_t+1},\ldots, i^0_{t_F-1}  \}$
        \State solve the FTOCP $J(x_t, x_F, q_F, \mathcal{I}_t, N_t)$ from~\eqref{eq:mpc}
        \State store $c_m = J(x_t, x_F, q_F, \mathcal{I}_t, N_t)$
        \State store $\bar u_m = u^*_{t|t}$ 
        \State store $\bar x_{F,m} = x^*_{t+N_t|t}$
        \If {$m>0$ and $c_{m-1} < c_m$} 
          \State $c^*_t = c_{m-1}$ \Comment{Pick best cost}
          \State $m^*_t = m-1$ \Comment{Pick best cost index}
        \State \textbf{break}
        \EndIf
      \EndFor
        \If {$\bar x_{F,m^*_t} = x_{T}^0$} \Comment{Horizon and Parameter Update}
        \State store $N_{t+1} = \max(1, N_t-1)$
      \State store $k_{t+1} = T$
	\Else
        \State store $N_{t+1} = N_t$
      \State store $k_{t+1} = k_t + m^*_t+1$ 
        \EndIf
      \State $u_t = \bar u_{m^*_t}$
      \State \textbf{Outputs} $c_t^*$, $u_t$
  \end{algorithmic}
\end{algorithm}

\subsection{Policy Synthesis}
This section describes the proposed strategy. For each state $x_t^0$ of the feasible trajectory $\mathbf{x}^0$ from Assumption~\ref{ass:firstFeas}, we define the vector
\begin{equation*}
    {\mathbf{i}}^0 = [i_0^0, \ldots, i_T^0],
\end{equation*}
where $i_t^0 \in \{1, \ldots, R\}$ identifies the region containing the system's state at time $t$, i.e., $x_t^0 \in \mathcal{D}_{i_t^0}$.
Moreover, for each state $x_t^0$ we introduce the cost-to-go $q_t^0$ given by the recursion:
\begin{equation}\label{eq:realizedCost}
q_t^0 = l(x_k^0, u_k^0)+q_{t+1}^0,
\end{equation}
for $q_{T}^0=0$ and the stage cost $l :\mathbb{R}^n \times \mathbb{R}^d \rightarrow \mathbb{R}$ from~\eqref{eq:ftocp}.
Finally, we define the cost vector 
\begin{equation*}
    {\mathbf{q}}^0 = [q_0^0, \ldots, q_{T}^0].
\end{equation*}

The cost vector ${\mathbf{q}}^0$,  the feasible trajectory ${\mathbf{x}}^0$ and the vector of indices ${\mathbf{i}}^0$, together with the parameters $M\in\{0,1,2,\ldots\}$ and $N\in\{0,1,\ldots\}$, are used to initialize the proposed control policy, which is described by Algorithm~\ref{algo:policy}. At each time $t$, Algorithm~\ref{algo:policy} takes as input the state of the system $x_t$ and it returns the best cost and the control action~$u_t$, which is applied to system~\eqref{eq:sys}. First, given the state $x_t$ we identify the region's index $i_t$ such that $x_t \in \mathcal{D}_{i_t}$ (line~1). Afterwards, we solve $M$ times the FTOCP~\eqref{eq:mpc} with a different terminal state $x_F = x^0_{t_F}$, terminal cost $q_F = q^0_{t_F}$ and set of indices $\mathcal{I}_t$~(lines~4-18). Note that the set of indices $\mathcal{I}_t$ (line~8) is computed appending to the current region's index $i_t$ a sequence of $N_t-1$ indices selected backward from time $t_F$, which is chosen independently of $N_t$ for $t > 0$. As a result, at each time $t$ the controller solves the FTOCP~\eqref{eq:ftocp} using a sequence of indices which is different from the one associated with the first feasible trajectory\footnote{For example, at time $t=0$, for $m=2$, $k_0= N = 4$ and $T=100$, we have that $t_F = k_0+m=6$; therefore $\mathcal{I}_0 = \{i_0, i^0_{3}, i^0_{4}, i^0_{5}  \}$ and $x_F = x_6^0$.}. As we will discuss in Section~\ref{sec:properties} at time $t$ and for $m=0$, the FTOCP~\eqref{eq:mpc} is feasible when the terminal components are defined as in lines~5-7. Therefore if for $m>0$ we have that $c_{m-1}<c_m$, we stop solving the set of $M$ NLPs and we save the best cost and the associated index $m_t^*$. Then, 
we update the parameter $k_t$ that is used to define the terminal MPC components and we shrink the prediction horizon, if the terminal predicted state $\bar x_{F,m^*_t}$ equals the terminal state $x_{T}^0$. Finally, we return the optimal control action $u_t = \bar u_{m^*_t}$.

\vspace{-0.2cm}
\subsection{Policy Iteration}\label{sec:iterativeDesign}
The control policy from Algorithm~\ref{algo:policy} leverages the feasible trajectory~$\mathbf{x}^0$ to compute the terminal components used in the MPC problem~\eqref{eq:mpc} and, as a result, the performance of the proposed methodology is affected by the optimality of~$\mathbf{x}^0$. 
In this section, we discuss a policy iteration strategy that may be used to improve the performance of the control policy from Algorithm~\ref{algo:policy}. 
In particular, we simulate the closed-loop system and we iteratively update the control policy.

At iteration $j\geq1$, we define the control policy 
$\pi^j : \mathbb{R}^n \rightarrow \mathbb{R}^d$
which is given by Algorithm~\ref{algo:policy} initialized using the feasible trajectory ${\mathbf{x}}^{j-1}$, the vector of indices ${\mathbf{i}}^{j-1}$, and the cost vector ${\mathbf{q}}^{j-1}$. For an initial condition $x_0^j = x_S$, the policy~$\pi^j$ may be used to simulate the trajectory of the closed-loop system:
\begin{equation}\label{eq:sys_j}
x^{j}_{t+1} = f_i(x_t^j , \pi^j(x_t^j)), \quad \mbox{if } x_t^j \in \mathcal{D}_i,
\end{equation}
which is then used to update the policy $\pi^{j+1}$ at the next iteration $j+1$, as shown in Algorithm~\ref{algo:iterativeUpdate}. In what follows, we show that the closed-loop performance of the control policy~$\pi^j$ is non-decreasing at each $j$th policy update.

\begin{algorithm}[t!]
  \caption{Iterative Policy Update}\label{algo:iterativeUpdate}
  \begin{algorithmic}[1]
      \State \textbf{Init Parameters:} ${\mathbf{q}}^0$, ${\mathbf{x}}^0$, ${\mathbf{i}}^0$, $M$, $N$, $x_S$, $T$
      \State \textbf{Input:} $j$
      \State define $\pi^1$ via Algorithm~\ref{algo:policy} initialized with ${\mathbf{q}}^0$, ${\mathbf{x}}^0$, ${\mathbf{i}}^0$, $\phantom{sssssssssssdsdsdsdssdsdsd}$ $M$, $k_0=N$, $N_0=N$, $T$
      \For{$i \in \{1, \ldots, j\}$} \Comment{Policy iteration loop}
        \State simulate the closed-loop system~\eqref{eq:sys_j} from $x_0^i=x_S$
        \State set $\mathbf{x}^i=[x_0^i, \ldots, x_T^i]$
        \State compute the region indices $\mathbf{i}^i$ from $\mathbf{x}^i$
        \State compute $\mathbf{q}^i$ from~\eqref{eq:realizedCost} with $q_{T}^i=0$
          \State define $\pi^{i+1}$ via Algorithm~\ref{algo:policy} initialized with ${\mathbf{q}}^i$, ${\mathbf{x}}^i$, ${\mathbf{i}}^i$, $\phantom{sssssssssssdsdsdsdsdsdsdsdds}$ $M$, $k_0=N$, $N_0=N$
      \EndFor
      \State \textbf{Outputs} $(\pi^0, {\mathbf{x}}^0, {\mathbf{q}}^0), \cdots, (\pi^{j+1}, {\mathbf{x}}^{j+1}, {\mathbf{q}}^{j+1})$
  \end{algorithmic}
\end{algorithm}

\section{Properties}\label{sec:properties}

\subsection{Recursive Feasibility and Finite-Time Convergence}
We show that at each time $t$ Algorithm~\ref{algo:policy} in closed-loop with system~\eqref{eq:sys} guarantees constraint satisfaction and that the closed-loop system converges in finite time to the set $\mathcal{G}$.
\begin{theorem}\label{th:recFeas}
Consider the closed-loop system~\eqref{eq:sys} and~\eqref{eq:policy}, where the policy~$\pi$ is given by Algorithm~\ref{algo:policy}.
Let Assumptions~\ref{ass:firstFeas}-\ref{ass:invariance} hold. If at time $t=0$ the initial condition $x_0 = x_S$, then the closed-loop system~\eqref{eq:sys} and~\eqref{eq:policy} satisfies state and input constraints from~\eqref{eq:cnstr} and it reaches the terminal set $\mathcal{G}$ at time $T$, i.e., $x_t \in \mathcal{X},u_t \in \mathcal{U}, \forall t \in \{0,  \ldots, T-1\} \text{ and } x_T \in \mathcal{G}.$
\end{theorem}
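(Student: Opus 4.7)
The plan is to establish the theorem by induction on time $t$, maintaining the invariant that at every $t\in\{0,\ldots,T-1\}$ the FTOCP~\eqref{eq:mpc} solved inside Algorithm~\ref{algo:policy} is feasible (at least for the index $m=0$), so that the policy returns an admissible $u_t\in\mathcal{U}$ and the closed-loop state $x_{t+1}$ lies in $\mathcal{X}\cap\mathcal{D}_{i_{t+1}}$ for an index $i_{t+1}$ matching the first element of the next set $\mathcal{I}_{t+1}$ prescribed by Algorithm~\ref{algo:policy}. For the base case $t=0$, with $x_0=x_S=x_0^0$ and $k_0=N_0=N$, line~8 gives $\mathcal{I}_0=\{i_0^0,\ldots,i_{N-1}^0\}$ and the terminal state is $x_F=x_N^0$; the prefix $[x_0^0,\ldots,x_N^0]$ of the warm-start trajectory with inputs $[u_0^0,\ldots,u_{N-1}^0]$ is by Assumption~\ref{ass:firstFeas} an explicit feasible candidate.

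For the inductive step, I would build an explicit feasible candidate for the FTOCP at $t+1$ from the optimal solution at $t$. Let $[x^*_{t|t},\ldots,x^*_{t+N_t|t}]$ and $[u^*_{t|t},\ldots,u^*_{t+N_t-1|t}]$ be the optimizer with chosen index $m^*_t$, so that $x^*_{t+N_t|t}=x^0_{t_F}$ with $t_F=\min(k_t+m^*_t,T)$; the applied state $x_{t+1}=x^*_{t+1|t}$ then belongs to $\mathcal{D}_{i^0_{t_F-N_t+1}}$, which coincides with the first entry of $\mathcal{I}_{t+1}$ for $m=0$ on line~8. In the non-shrinking branch $\bar x_{F,m^*_t}\ne x_T^0$ (line~22), I would concatenate the shifted tail $[x^*_{t+1|t},\ldots,x^*_{t+N_t|t}]$ with one additional state-input pair $(x^0_{t_F+1},u^0_{t_F})$ from the warm-start trajectory, so that the updated region sequence and terminal state $x^0_{k_{t+1}}$ both match the new FTOCP. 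In the shrinking branch $\bar x_{F,m^*_t}=x_T^0$ with $N_t\ge 2$ (line~20), the shifted tail alone is a candidate of length $N_{t+1}+1=N_t$ whose terminal state remains $x_T^0\in\mathcal{G}$.

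Finite-time convergence $x_T\in\mathcal{G}$ then follows from a bookkeeping argument on the triple $(t,k_t,N_t)$: once the shrinking branch is triggered at some $t^\star$, the updates preserve $t+N_t$ while strictly decrementing $N_t$, so by $t=T-1$ the horizon has collapsed to $N_{T-1}=1$ and the terminal-equality constraint of~\eqref{eq:mpc} forces $x_T=x^*_{T|T-1}=x_T^0\in\mathcal{G}$. The main obstacle will be showing that the cap $t_F=\min(k_t+m,T)$ in line~5 together with the recursion $k_{t+1}=k_t+m^*_t+1$ triggers the shrinking branch for the first time exactly at $t^\star+N_{t^\star}=T$, so that the degenerate configuration $N_t=1$ with $x_t=x_T^0$ but $t<T-1$ (which would require $x_T^0$ to be a fixed point rather than merely a control-invariant point) never occurs; Assumption~\ref{ass:invariance} is then used only to certify admissibility of the final control $u_{T-1}$.
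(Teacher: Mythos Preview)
Your proposal follows essentially the same recursive-feasibility skeleton as the paper: induction on $t$, base case via Assumption~\ref{ass:firstFeas} with the prefix $[x_0^0,\ldots,x_N^0]$, and an inductive step split between the non-shrinking branch (shift the optimal tail and append one warm-start pair $(x^0_{t_F+1},u^0_{t_F})$) and the shrinking branch with $N_t\ge2$ (shift the tail alone). These are precisely the paper's Cases~3 and~2, and your region-index bookkeeping for $\mathcal{I}_{t+1}$ is correct.

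The substantive difference is your treatment of the configuration $N_t=1$ with $\bar x_{F,m^*_t}=x_T^0$. The paper handles this as an explicit third case (its Case~1), invoking Assumption~\ref{ass:invariance} to produce a feasible one-step action from $x_{t+1}=x_T^0$ back to $x_F=x_T^0$. You instead try to rule this configuration out for $t<T-1$ by arguing that the shrinking branch first fires at some $t^\star$ with $t^\star+N=T$.

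That obstacle is not surmountable under the stated hypotheses. Because the non-shrinking update is $k_{t+1}=k_t+m^*_t+1$, any step with $m^*_t>0$ makes $k_t$ outpace $N+t$; the cap $t_F=T$ can therefore be reached at some $t^\star<T-N$, and after $N-1$ further shrinks you arrive at $N_t=1$, $x_t=x_T^0$ strictly before $t=T-1$. So a separate Case~1 in the paper's style is unavoidable. Your parenthetical observation is nonetheless sharp: since the terminal constraint in~\eqref{eq:mpc} is the equality $x_{t+N|t}=x_T^0$ rather than membership in $\mathcal{G}$, what Case~1 actually requires is that $x_T^0$ be a fixed point under some admissible control, which coincides with control invariance of $\mathcal{G}$ only when $\mathcal{G}$ reduces to the singleton $\{x_T^0\}$; the paper's Case~1 implicitly makes this identification.
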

\begin{proof}
The proof follows by standard MPC arguments~\cite{borrelli2017predictive, rawlings2009model} and the construction of the time-varying component from Algorithm~\ref{algo:policy}. Assume that at time $t$ Algorithm~\ref{algo:policy} returns a feasible control action $u_t$ and let
\begin{equation}\label{eq:optSeqProof}
    [x_{t|t}^*, \ldots, x_{t+N_t|t}^*=x_{F}] \text { and }[u_{t|t}^*, \ldots, u_{t+N_t-1|t}^*],
\end{equation}
be the optimal solution associated with the $m^*_t$th FTOCP~\eqref{eq:mpc} solved at time $t$. 
Next, we consider three cases to show that the time-varying components defining the FTOCPs solved at line~9 of Algorithm~\ref{algo:policy} guarantee that Algorithm~\ref{algo:policy} returns a feasible control action $u_{t+1}$ at the next time step $t+1$:

\textit{Case 1:} If $\bar x_{F, m^*_t} = x_{T}^0$ and the horizon $N_t  = 1$, then $k_{t+1}=T$, $N_{t+1}=1$ and $x_{t+1} = x_{T}^0$, therefore by the invariance of $\mathcal{G}$ for $m=0$ the FTOCP~$J(x_{t+1}, x_F, q_F, \mathcal{I}_{t+1}, N_{t+1})$ with $x_F = x_{k_{t+1}} = x_{T}^0$ is feasible.

\textit{Case 2:} If $\bar x_{F, m^*_t} = x_{T}^0$ and the horizon $N_t  > 1$, then $k_{t+1}=T$ and for $m=0$ the following state-input sequences
$[x_{t+1|t}^*, \ldots, x_{t+N_t|t}^*] \text{ and } [u_{t+1|t}^*, \ldots, u_{t+N_t-1|t}^*]$
are feasible for the FTOCP~$J(x_{t+1}, x_F, q_F, \mathcal{I}_{t+1}, N_{t+1})$ with $x_F = x_{k_{t+1}} = x_{T}^0$ as $x_{t+N_t|t}^* = x_{T}^0$ and $N_{t+1}=N_t-1$.

\textit{Case 3:} If $\bar x_{F, m^*_t} \neq x_{T}^0$, then $k_{t+1} = k_t + m^*_t+1$ and for $m=0$ the state-input sequences $[x_{t+1|t}^*, \ldots, x_{t+N_t|t}^*=x_{k_t+m_t^*}^0, x_{k_t+m_t^*+1}^0]$ and $[u_{t+1|t}^*, \ldots, u_{t+N_t-1|t}^*, u_{k_t+m_t^*}^0]$
are feasible for the FTOCP~$J(x_{t+1}, x_F, q_F, \mathcal{I}_{t+1}, N_{k+1})$ with $x_F = x_{k_{t+1}} = x_{k_t+m_t^*+1}$.

From \textit{Cases}~$1$--$3$, we have that if at time $t$ Algorithm~\ref{algo:policy} returns a feasible action~$u_t$, then 
at time $t+1$ Algorithm~\ref{algo:policy} returns a feasible control action~$u_{t+1}$. Now, we notice that at time $t=0$ the sequence of actions $[u_0^0, \ldots, u_{N-1}^0]$ is feasible for the FTOCP~$J(x_{0}, x_F, q_F, \mathcal{I}_{0}, N_0)$ with $x_F = x_{N_0}^0$, which in turns implies that Algorithm~\ref{algo:policy} returns a feasible action $u_t$ at all times and that state and input constraints are satisfied.

Finally, we show finite time convergence of the closed-loop system to the goal set $\mathcal{G}$. From Algorithm~\ref{algo:policy}, we have that $k_t$ increases at each time step until $k_t = T$ after at most $T-N$ time steps and, afterwards, that the horizon shrinks. Therefore, at time $T-1$ we have that $N_{T-1} = 1$, $t_F =k_{T-1}=T$ and $x_F=x_{T}^0$, thus the predicted state at time $T-1$ of the optimal trajectory from~\eqref{eq:optSeqProof} satisfies $x^*_{T|T-1} = x_{T}^0$, which in turns implies that $x_{T} = x_{T|T-1}^* = x_{T}^0\in\mathcal{G}$.
\end{proof} 

\begin{corollary}
Consider the closed-loop system~\eqref{eq:sys} and~\eqref{eq:policy}, where the policy~$\pi$ is given by Algorithm~\ref{algo:policy}.
Let Assumptions~\ref{ass:firstFeas}-\ref{ass:invariance} hold. If at time $t$ Algorithm~\ref{algo:policy} returns a feasible control action~$u_t \in \mathcal{U}$, then the closed-loop system~\eqref{eq:sys} and~\eqref{eq:policy} satisfies constraints~\eqref{eq:cnstr} and it converges to the goal set~$\mathcal{G}$.
\end{corollary}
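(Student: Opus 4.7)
The plan is to reduce the corollary to the recursive machinery already developed in the proof of Theorem~\ref{th:recFeas}. The key observation is that the inductive step in that proof — namely \textit{Cases 1--3} — does not depend on \emph{how} feasibility was obtained at the current time step; it only uses the existence of an optimal sequence of the form~\eqref{eq:optSeqProof} together with the update rules for $k_{t+1}$ and $N_{t+1}$ on lines~19--25 of Algorithm~\ref{algo:policy}. The feasible trajectory ${\mathbf{x}}^0$ from Assumption~\ref{ass:firstFeas} was invoked in Theorem~\ref{th:recFeas} solely to establish the base case $t=0$; everything afterwards was purely inductive.

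First, I would take the hypothesis of the corollary as the new base case, namely that at the given time $t$ Algorithm~\ref{algo:policy} returns a feasible $u_t\in\mathcal{U}$ with an associated optimal state–input sequence of the form~\eqref{eq:optSeqProof}. Then I would apply \textit{Cases 1--3} from the proof of Theorem~\ref{th:recFeas} verbatim to perform the induction on $\tau \geq t$, concluding that Algorithm~\ref{algo:policy} returns a feasible action at every subsequent time. This immediately yields $x_\tau \in \mathcal{X}$ and $u_\tau \in \mathcal{U}$ for all $\tau \geq t$, which is the constraint-satisfaction part of the conclusion.

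Second, for convergence to $\mathcal{G}$ I would reuse the monotonicity argument from the last paragraph of the proof of Theorem~\ref{th:recFeas}: the counter $k_\tau$ is non-decreasing and increases by at least one at every step where $\bar x_{F,m^*_\tau}\neq x_T^0$, so after at most $T-k_t$ steps it reaches $k_\tau = T$. From that moment the horizon $N_\tau$ strictly decreases until $N_\tau = 1$, at which point the terminal constraint of the FTOCP~\eqref{eq:mpc} forces the closed-loop state to equal $x_T^0\in\mathcal{G}$. Assumption~\ref{ass:invariance} then keeps the trajectory inside $\mathcal{G}$ thereafter.

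I do not foresee any serious obstacle: the only point that needs a brief check is that the update rules for $(k_{\tau+1},N_{\tau+1})$ on lines~19--25 and the construction of $\mathcal{I}_{\tau+1}$ on line~8 of Algorithm~\ref{algo:policy} are agnostic to whether the induction is started at $t=0$ or at a later time, which is immediate from inspection since they depend only on the current values of $k_\tau$, $N_\tau$, and on the comparison $\bar x_{F,m^*_\tau}=x_T^0$. The corollary is therefore essentially a restatement of the inductive step of Theorem~\ref{th:recFeas}, freed from the particular initialization exploited there.
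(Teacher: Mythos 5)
Your argument is exactly the one the paper intends: the corollary is stated without an explicit proof precisely because, as you observe, \textit{Cases 1--3} of Theorem~\ref{th:recFeas} form an induction whose step needs only feasibility at the current time, and the finite-time convergence argument (the counter $k_\tau$ reaching $T$, then the horizon shrinking to $1$) is likewise independent of the base case. Your proposal is correct and coincides with the paper's reasoning.
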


The above corollary 
highlights the advantage of computing a policy that maps states to actions and it can be used to deal with perturbed initial conditions and uncertainties. In the result section we perform an empirical study where we test the robustness of the proposed methodology by changing initial conditions and simulating disturbances acting on the system.

\subsection{Iterative Improvement}
This section discusses the properties of the iterative Algorithm~\ref{algo:iterativeUpdate}. In particular, we show that at each policy update the cumulative cost associated with the closed-loop trajectories from~\eqref{eq:sys_j} is non-increasing. Notice that our strategy guarantees non-increasing cost at each update, but no guarantees are given about the optimality of the trajectory at convergence.

\begin{theorem}
For $i\in \{0,\ldots,j\}$ consider the closed-loop trajectories $\mathbf{x}^i$ from Algorithm~\ref{algo:iterativeUpdate}.
If Assumptions~\ref{ass:firstFeas}-\ref{ass:invariance} hold, then we have that at each policy update the cost associated with the closed-loop trajectories is non-increasing, i.e., $q_0^{i-1} \geq q_{0}^i,~\forall i \in \{1,\ldots,j\},$
where $q_0^i = \sum_{t=0}^{T-1} l(x_t^i,u_t^i)$.
\end{theorem}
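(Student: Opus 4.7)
The plan is to establish, via a standard MPC-style dynamic-programming argument, that the cost $c_t^*$ returned by Algorithm~\ref{algo:policy} along the closed-loop trajectory of $\pi^i$ is a Lyapunov-type upper bound on the remaining tail cost; combining this with a warm-start bound at $t=0$ yields the chain $q_0^i \leq c_0^* \leq q_0^{i-1}$.

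First I would prove the warm-start bound at $t=0$. At iteration $i$ the policy is initialized with $\mathbf{x}^{i-1}$, $\mathbf{u}^{i-1}$, $\mathbf{i}^{i-1}$, $\mathbf{q}^{i-1}$, so for $m=0$ the candidate sequence $[u_0^{i-1},\ldots,u_{N-1}^{i-1}]$ is feasible for the FTOCP~\eqref{eq:mpc} with terminal state $x_N^{i-1}$ and terminal cost $q_N^{i-1}$; by the recursion~\eqref{eq:realizedCost}, its cost equals $q_0^{i-1}$. Because the $m$-loop in Algorithm~\ref{algo:policy} breaks only when the cost first strictly increases, the returned index sits at the end of a non-increasing prefix $c_0 \geq c_1 \geq \ldots \geq c_{m_0^*}$, and therefore $c_0^* \leq c_0 \leq q_0^{i-1}$.

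Next I would prove the one-step decrease $c_{t+1}^* \leq c_t^* - l(x_t^i, u_t^i)$ by reusing the three-case construction from Theorem~\ref{th:recFeas}. In Case 3, shifting the optimal predicted trajectory at time $t$ and appending $u^{i-1}_{k_t+m_t^*}$ yields a feasible $m=0$ candidate at $t+1$ whose cost, after combining $l(x^{i-1}_{k_t+m_t^*}, u^{i-1}_{k_t+m_t^*}) + q^{i-1}_{k_t+m_t^*+1}$ into $q^{i-1}_{k_t+m_t^*}$ via~\eqref{eq:realizedCost}, equals exactly $c_t^* - l(x_t^i, u_t^i)$. Cases 1 and 2 are handled analogously, using $q_T^{i-1}=0$ and the fact that the invariance and zero-stage-cost part of Assumption~\ref{ass:invariance} make the terminal constraint at $x_T^{i-1}\in\mathcal{G}$ self-sustaining. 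The same $m$-loop observation from Step 1 then transfers this bound from the $m=0$ cost to the returned $c_{t+1}^*$. Telescoping from $t=0$ to $t=T-1$, together with the terminal identity $c_{T-1}^* = l(x_{T-1}^i, u_{T-1}^i)$ (at $T-1$ the horizon has shrunk to one with terminal state $x_T^{i-1}\in\mathcal{G}$ and terminal cost zero), gives $c_0^* \geq \sum_{t=0}^{T-1} l(x_t^i, u_t^i) = q_0^i$; combining with the warm-start bound closes the proof.

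The main obstacle will be reconciling the per-step decrease with Algorithm~\ref{algo:policy}'s non-standard selection rule, which returns the last value before the first increase rather than the global minimum of $\{c_0,\ldots,c_{M-1}\}$. The key observation is that this rule still guarantees $c_t^* \leq c_{t,m=0}$, so the $m=0$ feasibility candidates built in Theorem~\ref{th:recFeas}'s case analysis are sufficient to bound both $c_0^*$ and every $c_{t+1}^*$. A secondary but non-trivial bookkeeping issue is the shrinking-horizon behavior in Cases 1 and 2, where the $N_t$-index may decrement and the reference terminal state coincides with $x_T^{i-1}$; here the vanishing terminal cost $q^{i-1}_T = 0$ and the invariance clause of Assumption~\ref{ass:invariance} keep the telescoping tight.
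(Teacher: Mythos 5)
Your proposal is correct and follows essentially the same route as the paper: the warm-start bound $c_0^* \leq q_0^{i-1}$ from the $m=0$ candidate built on the previous trajectory, the per-step decrease $c_t^* \geq l(x_t^i,u_t^i) + c_{t+1}^*$ established through the same three cases as Theorem~\ref{th:recFeas}, and the telescoping to $c_0^* \geq q_0^i$. Your explicit remark that the break-on-first-increase selection rule still guarantees $c_t^* \leq c_{t,m=0}$ is a point the paper uses only implicitly, but it does not change the argument.
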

\begin{proof} From Theorem~\ref{th:recFeas} it follows that at each $j$th update the policy~$\pi^j$ from Algorithm~\ref{algo:policy} returns a feasible action~$u_t^{j}$ and the feasible state-input trajectories $\mathbf{x}^j$ and $\mathbf{u}^j$.
At time $t$, let $c^{*,j}_t$ be the optimal cost of the $m_t^*$th FTOCP and let $[x_{t|t}^{*,j}, \ldots, x_{t+N_t|t}^{*,j}] \text{ and } [u_{t|t}^{*,j}, \ldots, u_{t+N_t-1|t}^{*,j}]$
be the optimal solution. Then we write the optimal cost as
\begin{equation}\label{eq:costWrittenOut}
\begin{aligned}
    c_t^{*,j} &= \sum_{k=t}^{t+N_t-1} l(x_{k|t}^{*,j}, u_{k|t}^{*,j}) + q_{k_t+m_t^*}^{j-1}\\
    &=l(x_{t|t}^{*,j}, u_{t|t}^{*,j})+\sum_{k=t+1}^{t+N_t-1} l(x_{k|t}^{*,j}, u_{k|t}^{*,j}) + q_{k_t+m_t^*}^{j-1}.
\end{aligned}
\end{equation}
Next, we consider three cases to analyze the time evolution of the optimal cost $c_t^{*,j}$ associated with the $m_t^*$th FTOCP solved at line~9 of Algorithm~\ref{algo:policy}:

\textit{Case 1:} If $\bar x_{F, m^*_t} = x_{T}^{j-1}$ and the horizon $N_t  = 1$, then $k_{t+1}=T, N_{t+1}=1$, $x_{t+1}=x_T^{j-1}$ and the FTOCP for $m=0$ is feasible (\textit{Case~1} of Theorem~\ref{th:recFeas}) which together with Assumption~\ref{ass:invariance} imply that $c_{t+1}^{*,j}=0$. Thus, we have that 
\begin{equation}\label{eq:th2Case1}
\begin{aligned}
    c_{t}^{*,j} &= l(x_{t|t}^{*,j}, u_{t|t}^{*,j}) + q_{m_t^*}^{j-1} = l(x_{t|t}^{*,j}, u_{t|t}^{*,j})+ c_{t+1}^{*,j},
    \end{aligned}
\end{equation}
as the terminal cost $q_{m_t^*}^{j-1} = 0 = c_{t+1}^{*,j}$ for $\bar x_{F, m^*_t} = x_{T}^{j-1}$. 

\textit{Case 2:} If $\bar x_{F, m^*_t} = x_{T}^{j-1}$ and the horizon $N_t > 1$, then we have that the last two terms in~\eqref{eq:costWrittenOut} represent the cost $\bar c_{t+1}^j$ associated with the state-input sequences $[x_{t+1|t}^{*,j},\ldots,x_{t+N_t|t}^{*,j}]$ and $[u_{t+1|t}^{*,j},\ldots,u_{t+N_t-1|t}^{*,j}]$, which are feasible at time $t+1$ (\textit{Case}~2 of Theorem~\ref{th:recFeas}) and therefore we have that
\begin{equation}\label{eq:th2Case2}
    c_{t}^{*, j}= l(x_{t|t}^{*,j}, u_{t|t}^{*,j}) + \bar c_{t+1}^j \geq l(x_{t|t}^{*,j}, u_{t|t}^{*,j}) + c_{t+1}^{*, j}.
\end{equation}

\textit{Case 3:} If $\bar x_{F, m^*_t} \neq x_{T}^{j-1}$, then by definition~\eqref{eq:realizedCost} we have that the optimal cost can be written as 
\begin{equation*}
\begin{aligned}
    c_t^{*,j} &=l(x_{t|t}^{*,j}, u_{t|t}^{*,j})+\sum_{k=t+1}^{t+N_t-1} l(x_{k|t}^{*,j}, u_{k|t}^{*,j}) + q_{k_t+m_t^*}^{j-1}\\
    &=l(x_{t|t}^{*,j}, u_{t|t}^{*,j})+\sum_{k=t+1}^{t+N_t-1} l(x_{k|t}^{*,j}, u_{k|t}^{*,j}) \\& \quad\quad\quad\quad\quad\quad\quad\quad+ l(x_{k_t+m_t^*}^{j-1}, u_{k_t+m_t^*}^{j-1})+q_{k_t+m_t^*+1}^{j-1}.
\end{aligned}
\end{equation*}
Notice that the last three terms in the above equation represent the open-loop cost $\bar c_{t+1}^j $ associated with the state-input sequences  $[x_{t+1|t}^{*,j}, \ldots, x_{k_t+m_t^*|t}^{j-1}, x_{k_t+m_t^*+1|t}^{j-1}]$ and $[u_{t+1|t}^{*,j}, \ldots, u_{t+N_t-1|t}^{*,j}, u_{k_t+m_t^*|t}^{j-1}]$, which are feasible at time $t+1$ (\textit{Case}~3 of Theorem~\ref{th:recFeas}) and therefore we have that
\begin{equation}\label{eq:th2Case3}
    c_{t}^{*,j} = l(x_{t|t}^{*,j}, u_{t|t}^{*,j}) + \bar c_{t+1}^j \geq  l(x_{t|t}^{*,j}, u_{t|t}^{*,j}) + c_{t+1}^{*,j}.
\end{equation}


Finally, we notice that $c_T^{j,*}=0$ as from Theorem~\ref{th:recFeas} $x_T^j\in\mathcal{G}$ and therefore from equations~\eqref{eq:th2Case1}--\eqref{eq:th2Case3} we have that 
\begin{equation*}
    c_0^{*,j} \geq l(x_1^j, u_1^j) + c_{2}^* \geq \ldots \geq \sum_{t=0}^{T-1} l(x_{t}^j, u_{t}^j) + c_T^{*,j} = q_0^j, 
\end{equation*}
as $x_{t|t}^{*,j}= x_t^j$ and $u_{t|t}^{*,j}= u_t^j$ for all $t \in \{0, \ldots,T\}$. Furthermore, at time $t=0$ and for $m=0$ the sequence of open-loop actions $[u_0^j, \ldots, u_{N-1}^j]$ is feasible for $J(x_{0}^j, x_F, q_F, \mathcal{I}_{0}, N_0)$ with $x_F = x_{N}^j$ and the associated cost is $q_0^{j-1}$, which in turns implies 
 $q_0^{j-1}\geq c_0^j \geq q_0^j$.
\end{proof} 

\begin{figure}[t!]
    \centering
	\includegraphics[width= 1.00\columnwidth,trim=4 4 4 4,clip]{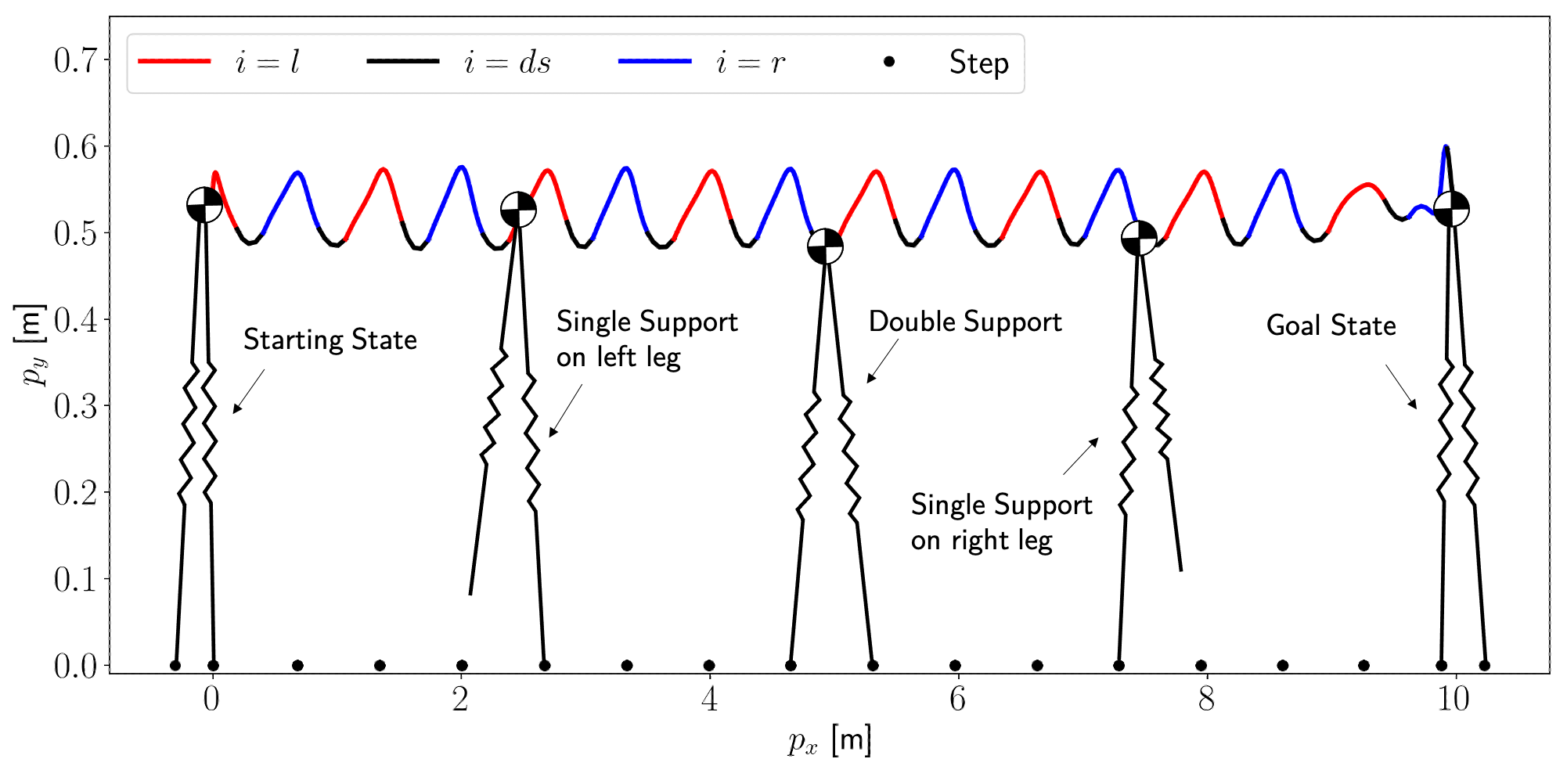}
    \vspace{-0.85cm}  \caption{Trajectory of the CoM and foot step locations associated with the feasible trajectory $\mathbf{x}^0$ used to initialize the control policy from Algorithm~\ref{algo:policy}.}
    \label{fig:first}
\end{figure}

\section{Application to SLIP Walking}
\subsection{Spring Loaded Inverted Pendulum Model}{}\label{sec:SLIP}
This section describes the Spring Loaded Inverted Pendulum (SLIP) model with massless legs~\cite{shahbazi2016unified}. The system state is $x = [p_x, p_y, v_x, v_y, z_x^l, z_x^r, z_y^l, z_y^r]$, where $(p_x, p_y)$ is the position of the Center of Mass (CoM), $(v_x, v_y)$ the velocity of the CoM, and $(z_x^k, z_y^k)$ the position of the $k$th foot for $k \in \{l,r\}$. The input vector $u = [\delta, v_z, v_y^l, v_y^r]$, where $\delta$ represents the change of leg stiffness, $v_z$ the velocity along the $x$-axis of the foot not in contact with the ground, and $(v_y^l, v_y^r)$ the velocities of the two feet on the $y$-axis. Given the state of the system, we can compute the angle $\theta^k = \text{tg}^{-1}( (p_x - z_x^k) / p_y ), \forall k \in \{l, r\}$ and the length $l^k = l_0 - \sqrt{(p_x - z_x^k )^2 + p_y^2} , \forall k \in \{l, r\}$, where $l_0=0.55$ is the rest leg length, for more details please refer to~\cite{shahbazi2016unified}.
These quantities are used to define the dynamics:
\begin{equation*}
\begin{aligned}
    f(x, u) &= x \\
    &+ \begin{bmatrix} v_x \\
    v_y \\
    (\delta+\delta_0)[\gamma^l l^l\sin(\theta^l)+\gamma^r l^r \cos(\theta^r)]\\
    (\delta+\delta_0)[\gamma^l l^l\cos(\theta^l)+\gamma^r l^r \cos(\theta^r)] - mg\\
    (1-\gamma^l) v_z\\
    (1-\gamma^r) v_z\\
    v_y^l \\ v_y^r
    \end{bmatrix}dt
\end{aligned}
\end{equation*}
where the integer variable $\gamma^k$ equals one if the $k$th foot is in contact with the ground. In particular, we have that $(\gamma^l, \gamma^r) = (1,1) \text{ if } x \in \mathcal{D}_{ds}$,  $(\gamma^l, \gamma^r) = (1,0) \text{ if } x \in \mathcal{D}_l$, and $(\gamma^l, \gamma^r) = (0,1)\text{ if } x \in \mathcal{D}_r$, where the regions are defined as follows:
\begin{equation*}
    \begin{aligned}
         \mathcal{D}_{ds} &= \{x \in \mathbb{R}^n | l^r \leq l_{\mathrm{max}}, l^l \leq l_{\mathrm{max}}, z_y^l = 0 \text{ and } z_y^r = 0\},\\
         \mathcal{D}_l &= \{x \in \mathbb{R}^n | l^r \leq l_{\mathrm{max}}, z_y^l = 0 \text{ and } z_y^r > 0\},\\
         \mathcal{D}_r &= \{x \in \mathbb{R}^n | l^l \leq l_{\mathrm{max}}, z_y^l > 0 \text{ and } z_y^r = 0\}.
    \end{aligned}
\end{equation*} 
Finally, we notice that when a sequence of regions is fixed, the vertical motion of the feet can be computed independently from the other states.

\vspace{-0.2cm}

\subsection{Simulations}\label{sec:Results}
First, we compute a feasible trajectory that steers the system from standing still at $(p_x, p_y)=(0, 0.85)$ to a goal state $x^g = (10, 0.85, 0, 0, 9.9, 10.2, 0, 0)$. In order to compute a feasible trajectory $\mathbf{x}^0$, we fixed a sequence of regions $\{\mathcal{D}_t\}_{t=0}^T$ where the system should be at each time $t$ and we solved the resulting NLP with IPOPT~\cite{wachter2006implementation} using CASADI~\cite{andersson2019casadi}. Furthermore, we added a slack variable to the terminal constraint, we used 
\begin{equation}\label{eq:expCost}
\begin{aligned}
    l(x,u)&= ||p_x-p_x^{\textrm{g}}||_2^2 + 10 ||p_y-p_y^{\textrm{g}}||_2^2 + ||v_x||_2^2 + ||v_y||_2^2 \\ &+ ||\delta||_2^2 + 0.1||v_z||_2^2,  
\end{aligned}
\end{equation}
and the input constraints $\delta \in \{\bar \delta : ||\bar \delta|| \leq 10 \}$ and $v_z \in \{\bar v_z : ||\bar v_z|| \leq 10 \}$. Code available at~\texttt{\url{https://github.com/urosolia/SLIP}}.

\begin{figure}[t!]
    \centering
	\includegraphics[width= 1.00\columnwidth,trim=4 4 4 4,clip]{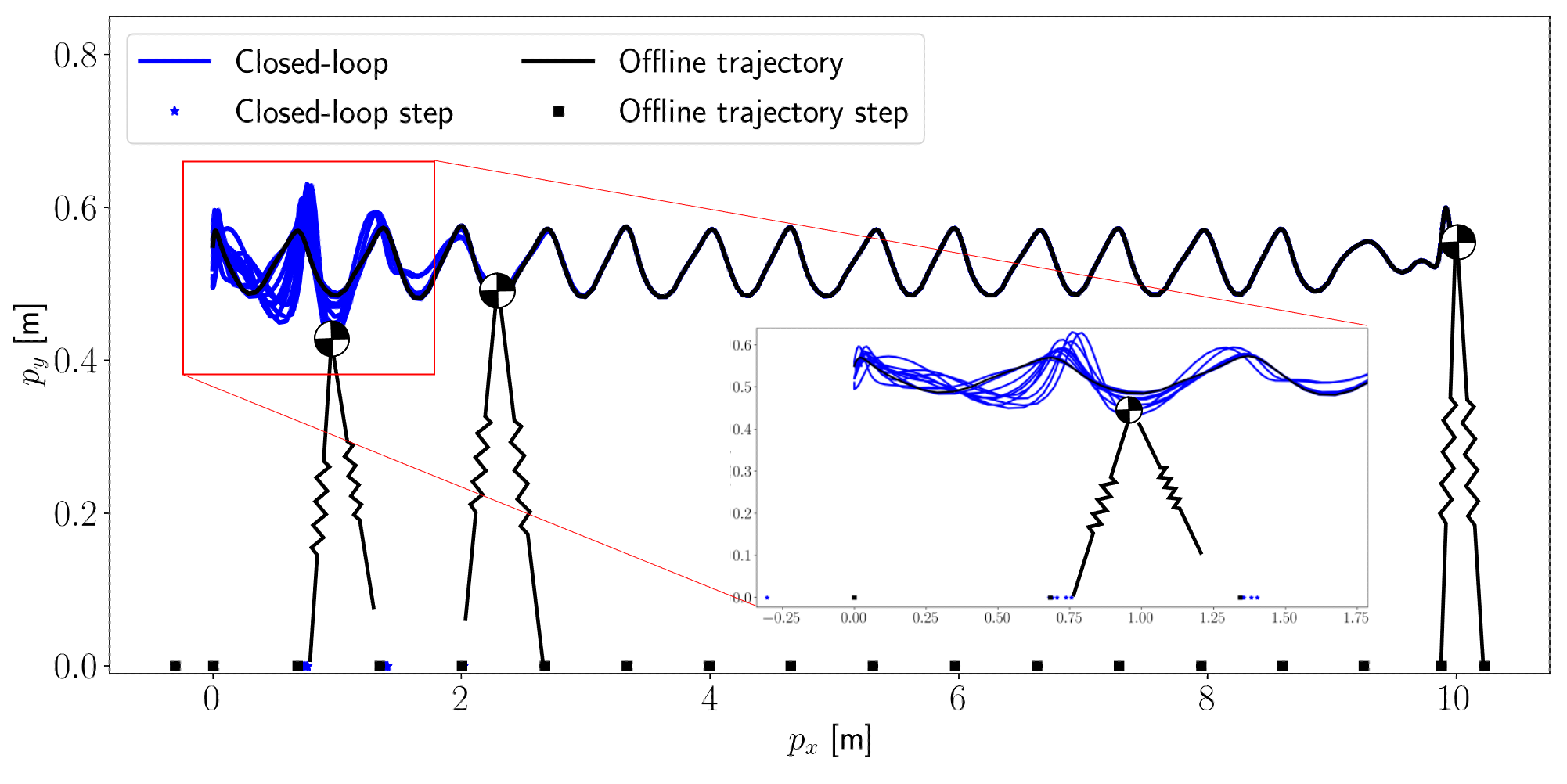}
    \vspace{-0.85cm}  
    \caption{Closed-loop trajectories of the CoM and foot steps for different initial conditions.}
    \label{fig:diffIC}
\end{figure}

Figure~\ref{fig:first} shows the feasible trajectory which steers the system from the starting configuration to the goal position while transitioning through phases of double support (black) and single support with the left (red) and right (blue) legs. The figure also shows the locations where the feet are in contact with the ground. This feasible trajectory is used to initialize the control policy from Algorithm~\ref{algo:policy}. We tested the proposed strategy with $N=30$ on $10$ different initial conditions in the neighborhood of the starting state $x_S$. We set $M=1$ to test the robustness to disturbances and changes in initial conditions of Algorithm~\ref{algo:policy}, when only one FTOCP is solve at each time~$t$. Figure~\ref{fig:diffIC} shows that for all initial conditions the controller is able to steer the system to the goal state. Notice that in order to stabilize the system the proposed strategy is able to plan a sequence of foot steps, which are different from the one associated with the feasible trajectory. Finally, it takes on average less than $0.05s$ to run Algorithm~\ref{algo:policy} with a maximum computation time of $0.114s$ across all simulations.

Furthermore, we compared the proposed strategy with a tracking controller which is defined removing the terminal constraint from~\eqref{eq:ftocp} and using a tracking cost instead of~\eqref{eq:expCost}. Both our method and the tracking controller are able complete the task. However, as shown in Figure~\ref{fig:disturbance}, the tracking controller fails to reach the goal when a disturbance hits the system. In Figure~\ref{fig:disturbance}, it is interesting to notice that the proposed approach initially deviates more from the offline trajectory compared to the tracking controller. This deviation allows the controller to compensate for the disturbance and to stabilize the system back to a periodic gait.

Finally, we tested Algorithm~\ref{algo:iterativeUpdate} to iteratively update the control policy. We set $j=40$ and we changed the stage cost to encode the objective of steering the system from the starting state to the goal state in minimum time. In particular, we defined the stage cost $\bar l(x,u) = \mathds{1}_\mathcal{G}(x) + 0.0001 l(x,u)$, where the function $\mathds{1}_\mathcal{G}(x)=0$ if $x \in x_{T}^0$ and $\mathds{1}_\mathcal{G}(x)=1$ otherwise. Moreover, we set $M=40$ to allow Algorithm~\ref{algo:policy} to solve multiple instances of the FTOCP~\eqref{eq:ftocp}. Note that as $M$ gets larger the controller can better explore the state space, but as a trade off the computational cost increases. In this example it takes on average less than $0.4s$ and at most $0.9s$ to run Algorithm~\ref{algo:policy}.
We initialized the proposed policy iteration strategy with the feasible trajectory $\mathbf{x}^0$ from Figure~\ref{fig:first} that steers the system from the starting point to the goal state in $396$ time steps, and our algorithm returned a policy which completes the task in $193$ time steps. The closed-loop trajectory is shown in Figure~\ref{fig:x40}. First the controller accelerates and as a result the CoM oscillates more compared to the first feasible trajectory from Figure~\ref{fig:first}. Finally, the controller slows down and reduces the oscillation to reach the goal position with zero speed and two feet on the ground.

\vspace{-0.1cm}

\section{Conclusions}
We presented an algorithm to synthesize control policies for hybrid systems by leveraging a feasible trajectory for the control task. 
We showed that the proposed methodology guarantees constraints satisfaction and convergence in finite time. Building upon the proposed synthesis strategy, we presented a policy iteration algorithm which guarantees that at each policy update the closed-loop performance is non-decreasing. Finally, we tested the proposed strategy on a discretize SLIP model.
\vspace{-0.15cm}

\section{Acknowledgements}
The authors would like to thank  
Andrew J. Taylor, Noel Csomay-Shanklin, and Wenlong Ma for suggestions and proofreading the manuscript.

\begin{figure}[t!]
    \centering
	\includegraphics[width= 1.00\columnwidth,trim=4 4 4 4,clip]{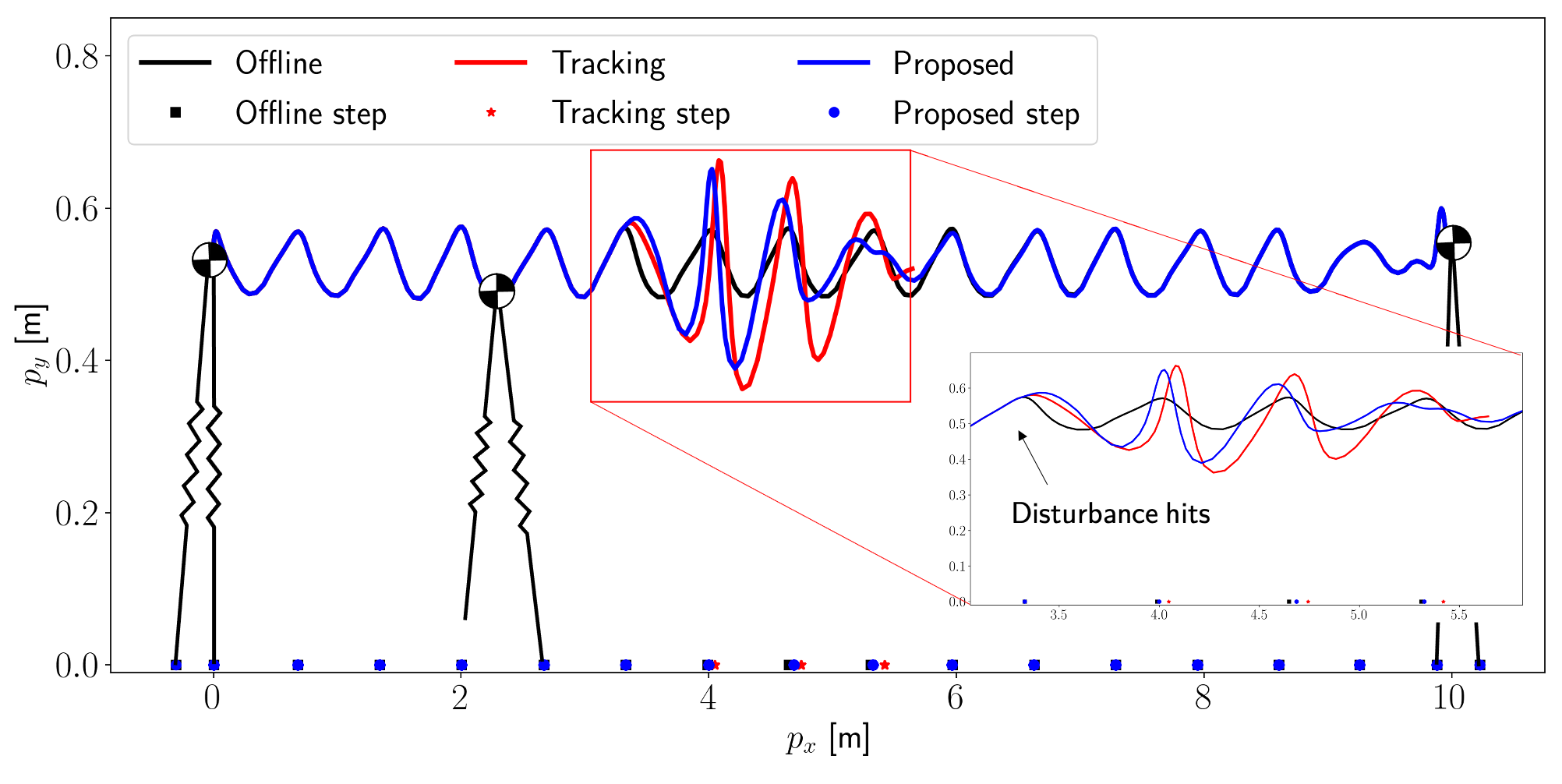}
	\vspace{-0.85cm}  
    \caption{Comparison between an MPC tracking controller and the proposed strategy, when a disturbance hits the system.}
    \label{fig:disturbance}
\end{figure}

\vspace{-0.15cm}

\renewcommand{\baselinestretch}{0.99}

\bibliographystyle{IEEEtran} 
\bibliography{IEEEabrv,mybibfile}

\begin{figure}[t!]
    \centering
	\includegraphics[width= 1.00\columnwidth,trim=4 4 4 4,clip]{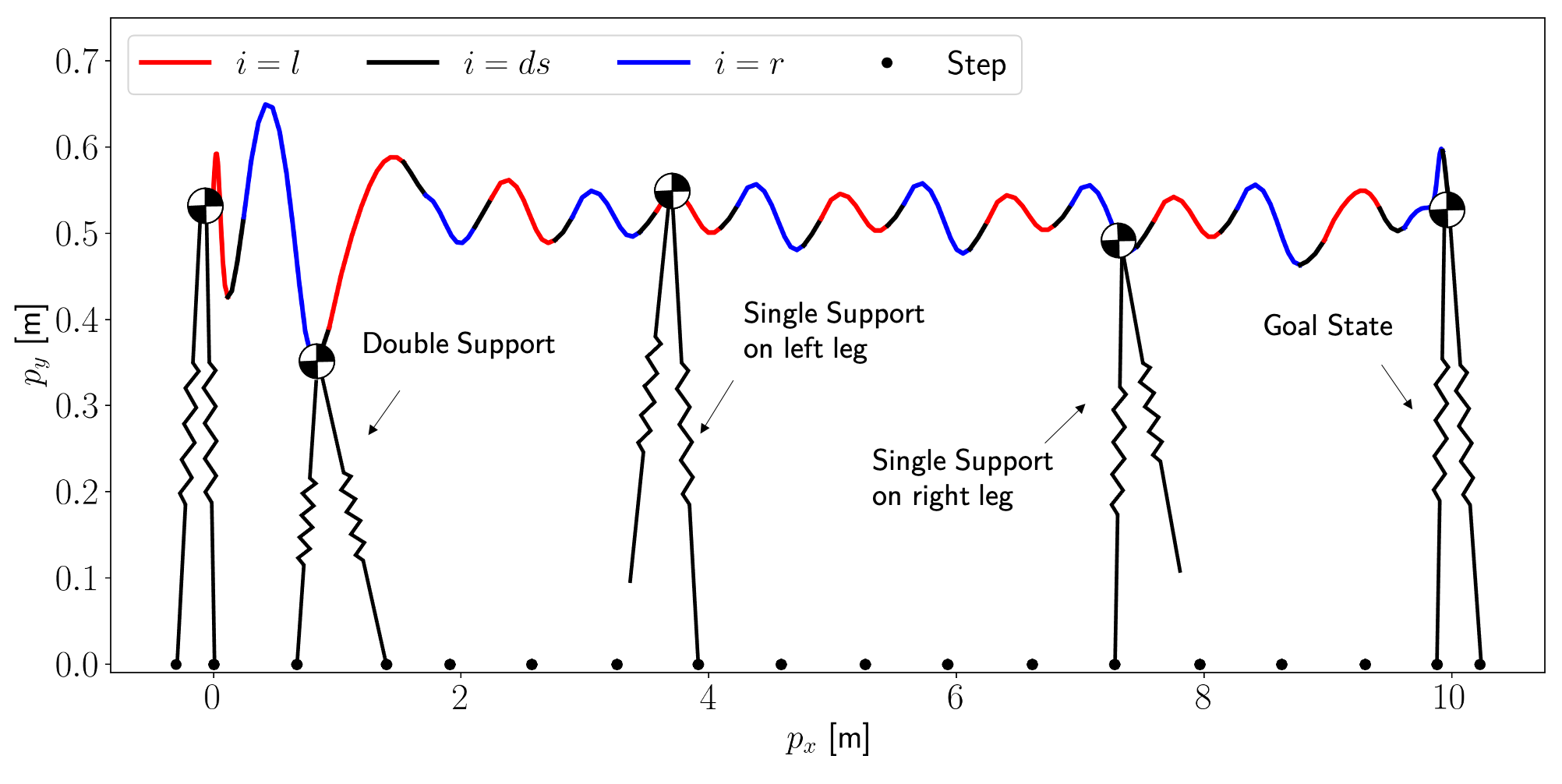}
	\vspace{-0.85cm}  
    \caption{Closed-loop trajectory computed from Algorithm~\ref{algo:iterativeUpdate} after $40$ updates of the control policy.}
    \label{fig:x40}
\end{figure}
\end{document}